\newcommand{\BB}[1]{\mathbb{#1}}
\newcommand{\C}[1]{\mathcal{#1}}
\DeclareMathOperator{\poly}{poly}
\DeclareMathOperator{\polylog}{polylog}
\DeclareMathOperator{\SQ}{SQ}
\DeclareMathOperator{\Q}{Q}
\newcommand{\eps}{\varepsilon}
\declaretheorem{theorem}
\declaretheorem[numbered=no,title=Theorem]{theorem*}
\theoremstyle{definition}
\declaretheorem[numbered=no,title=Proposition]{proposition*}
\declaretheorem[numbered=no,title=Corollary]{corollary*}
\declaretheorem[numbered=no]{definition}
\declaretheorem[numbered=no,title=Lemma]{lemma*}
\declaretheorem[sibling=theorem]{problem}
\declaretheorem[numbered=no,title=Problem]{problem*}
\begin{document}

\title{Quantum principal component analysis only achieves an exponential speedup \\because of its state preparation assumptions}


\author{Ewin Tang}
\email[]{ewint@cs.washington.edu}
\homepage[]{ewintang.com}
\affiliation{University of Washington}

\date{\today}

\begin{abstract}
A central roadblock to analyzing quantum algorithms on quantum states is the lack of a comparable input model for classical algorithms.
Inspired by recent work of the author [E.\ Tang, STOC'19], we introduce such a model, where we assume we can efficiently perform $\ell^2$-norm samples of input data, a natural analogue to quantum algorithms that assume efficient state preparation of classical data.
Though this model produces less practical algorithms than the (stronger) standard model of classical computation, it captures versions of many of the features and nuances of quantum linear algebra algorithms.
With this model, we describe classical analogues to Lloyd, Mohseni, and Rebentrost's quantum algorithms for principal component analysis [Nat.\ Phys.\ 10, 631 (2014)] and nearest-centroid clustering [arXiv:1307.0411].
Since they are only polynomially slower, these algorithms suggest that the exponential speedups of their quantum counterparts are simply an artifact of state preparation assumptions.
\end{abstract}



\maketitle

\section{Introduction}

Quantum machine learning (QML) has shown great promise toward yielding new exponential quantum speedups in machine learning, ever since the pioneering linear systems algorithm of Harrow, Hassidim, and Lloyd \cite{hhl09}.
Since machine learning (ML) routines often push real-world limits of computing power, an exponential improvement to algorithm speed would allow for ML systems with vastly greater capabilities.
While we have found many fast QML subroutines for ML problems since HHL \cite{rml14,wbl12,lgz16,zff15,bs17}, researchers have not been able to prove that these subroutines can be used to achieve an exponentially faster algorithm for a classical ML problem, even in the strongest input and output models \cite{childs09,Aar15}.
A recent work of the author \cite{tang18a} suggests a surprising reason why: even our best QML algorithms, with issues with input and output models resolved, fail to achieve exponential speedups.
This previous work constructs a classical algorithm matching, up to polynomial slowdown, a corresponding quantum algorithm for recommendation systems \cite{kp16}, which was previously believed to be one of the best candidates for an exponential speedup in machine learning \cite{Pre18}.
In light of this result, we need to question our intuitions and reconsider one of the guiding questions of the field: when is quantum linear algebra exponentially faster than classical linear algebra?

The main challenge in answering this question is not in finding fast classical algorithms, as one might expect.
Rather, it is that most QML algorithms are \emph{incomparable} to classical algorithms, since they take quantum states as input and output quantum states: we don't even know an analogous classical model of computation where we can search for similar classical algorithms~\cite{Aar15}.
The quantum recommendation system is unique in that it has a classical input, a data structure implementing QRAM, and classical output, a sample from a vector in the computational basis, allowing for rigorous comparisons with classical algorithms.

In our previous work we suggest an idea for developing classical analogues to QML algorithms beyond this exceptional case \cite{tang18a}:
\begin{quote}
    When QML algorithms are compared to classical ML algorithms in the context of finding speedups, any state preparation assumptions in the QML model should be matched with $\ell^2$-norm sampling assumptions in the classical ML model.
\end{quote}
In this work, we implement this idea by introducing a new input model, \emph{sample and query access} (SQ access), which is an $\ell^2$-norm sampling assumption.
We can get SQ access to data under typical state preparation assumptions, so fast classical algorithms in this model are strong barriers to their QML counterparts admitting exponential speedups.
To support our contention that the resulting model is the right notion to consider, we use it to dequantize two seminal and well-known QML algorithms, quantum principal component analysis \cite{lmr14} and quantum supervised clustering \cite{lmr13}.
That is, we give classical algorithms that, with classical SQ access assumptions replacing quantum state preparation assumptions, match the bounds and runtime of the corresponding quantum algorithms up to polynomial slowdown.
Surprisingly, we do so using only the classical toolkit originally applied to the recommendation systems problem, demonstrating the power of this model in analyzing QML algorithms.

From this work, we conclude that the exponential speedups of the quantum algorithms that we consider arise from strong input assumptions, rather than from the quantumness of the algorithms, since the speedups vanish when classical algorithms are given analogous assumptions.
In other words, in a wide swathe of settings, on \emph{classical data}, these algorithms do not give exponential speedups.
QML algorithms can still be useful for quantum data (say, states generated from a quantum system), though a priori it's not clear if they give a speedup in that case, since the analogous ``classical algorithm on quantum data'' isn't well-defined.

Our dequantized algorithms in the SQ access model provide the first formal evidence supporting the crucial concern about strong input and output assumptions in QML.
Based on these results, we recommend exercising care when analyzing quantum linear algebra algorithms, since some algorithms with poly-logarithmic runtimes only admit polynomial speedups.
BQP-complete QML problems, such as sparse matrix inversion \cite{hhl09} and quantum Boltzmann machine training \cite{kw17}, still cannot be dequantized in full unless BQP=BPP.
However, many QML problems that are not BQP-complete have strong input model assumptions (like QRAM) and low-rank-type assumptions (which makes sense for machine learning, where high-dimensional data often exhibits low-dimensional trends).
This regime is precisely when the classical approaches we outline here work, so such problems are highly susceptible to dequantization.
We believe continuing to explore the capabilities and limitations of this model is a fruitful direction for QML research.

\paragraph{Notation.}
$[n] := \{1,\ldots,n\}$.
Consider a vector $x \in \BB{C}^n$ and matrix $A \in \BB{C}^{m\times n}$.
$A_{i,*}$ and $A_{*,i}$ will refer to $A$'s $i$th row and column, respectively.
$\|x\|$, $\|A\|_F$, and $\|A\|$ will refer to $\ell^2$, Frobenius, and spectral norm, respectively.
$\ket{x} := \frac{1}{\|x\|}\sum_{i=1}^n x_i\ket{i}$ and $\ket{A} := \frac{1}{\|A\|_F}\sum_{i=1}^m \|A_{i,*}\|\ket{i}\ket{A_{i,*}}$ (where, by the previous definition, $\ket{A_{i,*}} = \frac{1}{\|A_{i,*}\|}\sum_{j=1}^n A_{i,j}\ket{j}$).
$A = \sum_{i=1}^{\min m,n} \sigma_iu_iv_i^\dagger$ is $A$'s singular value decomposition, where $u_i \in \BB{C}^m$, $v_i \in \BB{C}^n$, $\sigma_i \in \BB{R}$, $\{u_i\}$ and $\{v_i\}$ are sets of orthonormal vectors, and $\sigma_1 \geq \sigma_2 \geq \cdots \geq \sigma_{\min m,n} \geq 0$.
$A_{\sigma} := \sum_{\sigma_i \geq \sigma} \sigma_iu_iv_i^\dagger$ and $A_k := \sum_{i=1}^k \sigma_iu_iv_i^\dagger$ denote low-rank approximations to $A$.
We assume basic arithmetic operations take unit time, and $\tilde{O}(f) := O(f\log f)$.

\section{The dequantization model}
A typical QML algorithm works in the model where state preparation of input is efficient and a quantum state is output for measurement and post-processing.
(Here, we assume an ideal and fault-tolerant quantum computer.)
In particular, given a data point $x \in \BB{C}^n$ as input, we assume we can prepare copies of $\ket{x}$.
For $m$ input data points as a matrix $A \in \BB{C}^{m\times n}$, we additionally assume efficient preparation of $\ket{A}$, to preserve relative scale.
We wish to compare QML and classical ML on classical data, so state preparation usually requires access to this data and its normalization factors.
This informs the classical input model for our quantum-inspired algorithms, where we assume such access, and instead of preparing states, we can prepare measurements of these states.

\begin{definition}
We have $O(T)$-time \emph{sample and query access} to $x \in \BB{C}^n$ (notated $\SQ(x)$) if, in $O(T)$ time, we can query an index $i \in [n]$ for its entry $x_i$; produce an independent measurement of $\ket{x}$ in the computational basis; and query for $\|x\|$.
If we can only query for an estimate of the squared norm $\bar{x} \in (1\pm\nu)\|x\|^2$, then we denote this by $\SQ^\nu(x)$.
For $A \in \BB{C}^{m\times n}$, sample and query access to $A$ (notated $\SQ(A)$) is $\SQ(A_{1,*},\ldots, A_{n,*})$ along with $\SQ(\tilde{A})$ where $\tilde{A}$ is the vector of row norms, i.e.\ $\tilde{A}_i := \|A_{i,*}\|$.
\end{definition}

Sample and query (SQ) access will be our classical analogue to quantum state preparation.
As we noted previously \cite{tang18a}, we should be able to assume that classical analogues can efficiently measure input states: QML algorithms shouldn't rely on fast state preparation as the ``source'' of an exponential speedup.
The algorithm itself should create the speedup.

For typical instantiations of state preparation oracles on classical input, we can get efficient SQ access to input.
For example, given input in QRAM \cite{glm08}, a strong proposed generalization of classical RAM that supports state preparation, we can get log-dimension-time SQ access to input {\cite[Proposition~3.2]{tang18a}}.
Similarly, sparse and close-to-uniform vectors can be prepared efficiently, and correspondingly admit efficient SQ access~\footnote{See supplemental material for details on when sample and query access is possible; discussion on the relation of this work to the tensor networks literature; and full proofs for the results stated here. It includes Refs.~\cite{Pra14,grover02,vc06,White92,Vidal03,Vidal04,clopzm16,bc17,Eisert06,lvv15,Hoeffding94}}.
So, in usual QML settings, SQ assumptions are easier to satisfy than state preparation assumptions.

This leads to a model based on SQ access that we codify with the informal definition of ``dequantization''.
We say we \emph{dequantize} a quantum protocol
$\C{S}: O(T)\text{-time state preparation of }\ket{\phi_1},\ldots,\ket{\phi_c}\to \ket{\psi}$
if we describe a classical algorithm of the form
    $ \C{C}_\C{S}: O(T)\text{-time } \SQ(\phi_1,\ldots,\phi_c) \to \SQ^\nu(\psi) $
with similar guarantees to $\C{S}$ up to polynomial slowdown.
This is the sense in which we dequantized the quantum recommendation system in prior work \cite{kp16}.
In the rest of this article, we will dequantize two quantum algorithms, giving detailed sketches of the classical algorithms and leaving proofs of correctness to the supplemental material~\cite{Note1}.
These algorithms are applications of three protocols from our previous work \cite{tang18a} rephrased in our access model.

\section{Nearest-centroid classification}

Lloyd, Mohseni, and Rebentrost's quantum algorithm for clustering estimates the distance of a data point to the centroid of a cluster of points \cite{lmr13}.
The paper claims~\footnote{We were not able to verify the quantum algorithm (namely, the Hamiltonian simulation for preparing $\ket{\phi}$) as stated. For our purposes, we can make the minor additional assumption of efficient state preparation access to $\ket{\phi}$, which makes correctness obvious. When we refer to the quantum algorithm in this letter, we mean this version of it.} that this quantum algorithm gives an exponential speedup over classical algorithms.
We dequantize Lloyd et al's quantum supervised clustering algorithm \cite{lmr13} with only quadratic slowdown.
Though classical algorithms by Aaronson \cite{Aar15} and Wiebe et al.\ \cite[Section~7]{wks15} dequantize this algorithm for close-to-uniform and sparse input data, respectively, we are the first to give a general classical algorithm for this problem.

\begin{problem}[Centroid distance] \label{prob:sc}
Suppose we are given access to $V \in \BB{C}^{n\times d}$ and $u \in \BB{C}^d$.
Estimate $\|u - \frac{1}{n}\vec{1} V\|^2$ to $\eps$ additive error with probability $\geq 1-\delta$.
\end{problem}

Note that we are treating vectors as rows, with $\vec{1}$ the vector of ones.
Let $\bar{u} := \frac{u}{\|u\|}$ and let $\bar{V}$ be $V$, normalized so all rows have unit norm.
Both classical and quantum algorithms argue about $M \in \BB{R}^{(n+1)\times d}$ and $w \in \BB{R}^{n+1}$ instead of $u$ and $V$, where
$$
    M := \begin{bmatrix}\bar{u} \\ \frac{1}{\sqrt{n}}\bar{V}\end{bmatrix} \text{ and }
    w := \begin{bmatrix}\|u\| & -\frac{1}{\sqrt{n}}\tilde{V}\end{bmatrix}.
$$
Because $wM = u-\frac{1}{n}\vec{1}V$, we wish to estimate $\|wM\|^2 = wMM^\dagger w^\dagger$.
Let $Z := \|w\|^2 = \|u\|^2 + \frac{1}{n}\|V\|_F^2$ be an ``average norm'' parameter appearing in our algorithms.

\begin{theorem}[Quantum Nearest-Centroid \cite{lmr13}]
    Suppose that, in $O(T)$ time, we can (1) determine $\|u\|$ and $\|V\|_F$; or (2) prepare a state $\ket{u},\ket{V_1},\ldots,\ket{V_n}$, or $\ket{\tilde{V}}$.
    Then we can solve Problem~\ref{prob:sc} in $O(T\frac{Z}{\eps}\log\frac{1}{\delta})$ time.
\end{theorem}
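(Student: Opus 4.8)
The plan is to recover Lloyd--Mohseni--Rebentrost's algorithm as a single amplitude-estimation run on a state assembled from the given preparation oracles. Since $wM = u - \frac1n\vec 1 V$, it suffices to estimate $\|wM\|^2$, and I would reduce this to estimating the squared overlap $P := \big\|(\bra w\otimes I)\ket M\big\|^2$. The first ingredient is the key identity. Because the normalized rows of $M$ are exactly $\ket u$ (with row norm $1$) and $\ket{V_i}$ (with row norm $1/\sqrt n$), and $\|M\|_F^2 = \|\bar u\|^2 + \frac1n\sum_{i=1}^n\|\bar V_i\|^2 = 2$, we have $\ket M = \frac1{\sqrt2}\big(\ket0\ket u + \frac1{\sqrt n}\sum_{i=1}^n\ket i\ket{V_i}\big)$, while $\ket w = \frac1{\sqrt Z}\sum_{i=0}^n w_i\ket i$. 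Contracting the index register then gives
$$
(\bra w\otimes I)\ket M = \frac1{\sqrt{2Z}}\Big(\|u\|\,\ket u - \tfrac1n\sum_{i=1}^n\|V_i\|\,\ket{V_i}\Big) = \frac1{\sqrt{2Z}}\,wM,
$$
so $P = \frac1{2Z}\|wM\|^2$; hence an estimate of $P$ to additive error $\eps/(2Z)$, rescaled by $2Z$, solves Problem~\ref{prob:sc}.

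Second, I would check that $\ket M$ and $\ket w$ are each preparable in $O(T)$ time from assumptions (1)--(2). For $\ket M$: rotate an $(n+1)$-dimensional index register into $\frac1{\sqrt2}\ket0 + \frac1{\sqrt{2n}}\sum_{i\ge1}\ket i$, then, controlled on the index, invoke the state-preparation oracle for $\ket u$ when the index is $0$ and for $\ket{V_i}$ otherwise. For $\ket w$: since $\sum_{i\ge1}\|V_i\|\ket i = \|V\|_F\,\ket{\tilde V}$ and $Z = \|u\|^2 + \frac1n\|V\|_F^2$, we may write $\ket w = \frac{\|u\|}{\sqrt Z}\ket0 - \frac{\|V\|_F}{\sqrt{nZ}}\ket{\tilde V}$, with all three scalars available from assumption (1); a one-qubit rotation followed by a controlled call to the $\ket{\tilde V}$ oracle produces it. (Rows $u$ or $V_i$ that vanish carry zero weight in $M$ and $w$, so their undefined normalized states are never needed.) Let $U_w$ be the resulting unitary with $U_w\ket0 = \ket w$.

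Third is the estimation step. Apply $U_w^\dagger$ to the index register of $\ket M$; since $U_w^\dagger\ket w = \ket0$, the probability of observing $\ket0$ in that register is exactly $P$. Amplitude estimation on this ``good'' event, with $t = \Theta(Z/\eps)$ coherent repetitions of the circuit (prepare $\ket M$, then apply $U_w^\dagger$) and its inverse, produces $\tilde P$ with $|\tilde P - P| = O\big(\sqrt{P(1-P)}/t + 1/t^2\big) \le \eps/(2Z)$ with constant probability; taking the median of $O(\log\frac1\delta)$ independent runs boosts the confidence to $1-\delta$. Each repetition costs $O(T)$, for a total of $O\big(T\frac Z\eps\log\frac1\delta\big)$ time, and the output $2Z\tilde P$ meets the claimed guarantee. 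I expect the only delicate points to be bookkeeping: getting the $\|M\|_F = \sqrt2$ normalization and the overlap identity right with the signs of $w$ intact, and confirming that each controlled oracle call costs only $O(T)$ rather than more; the substantive content is just the identity $P = \frac1{2Z}\|wM\|^2$ together with off-the-shelf amplitude estimation.
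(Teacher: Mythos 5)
Your proposal is correct and reconstructs essentially the same argument the paper sketches: build $\ket{M}$ and $\ket{w}$ from the given preparation oracles, read off $\|wM\|^2$ from the squared overlap of $\ket{w}$ with the index register of $\ket{M}$, and amplitude-estimate that overlap with $\Theta(Z/\eps)$ Grover iterations and $O(\log\frac1\delta)$ median-of-runs amplification. Two cosmetic remarks, neither affecting correctness: you realize the overlap test by uncomputing with $U_w^\dagger$ and measuring $\ket{0}$ rather than a literal controlled-SWAP; this is a cleaner but equivalent device, and the paper also says ``amplitude amplification'' where ``amplitude estimation'' is meant, as you correctly use. Your normalization $P = \frac{1}{2Z}\|wM\|^2$ is the careful one (the paper writes $\frac{1}{Z}wMM^\dagger w^\dagger$, dropping the $\|M\|_F^2 = 2$ factor), but this constant is immaterial for the stated $O(T\frac{Z}{\eps}\log\frac1\delta)$ runtime.
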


The quantum algorithm proceeds by constructing the states $\ket{M}$ and $\ket{w}$, then performing a swap test to get $\ket{wM}$.
The swap test succeeds with probability $\frac{1}{Z}wMM^\dagger w^\dagger$, so we can run amplitude amplification to get an estimate up to $\eps$ error with $O(\frac{1}{\eps}\log\frac{1}{\delta})$ overhead.

Dequantizing this algorithm is simply a matter of dequantizing the swap test, which is done in Algorithm~\ref{alg:inner}.
Here, $\Q(y)$ is \emph{query access} to $y$, which supports querying $y$'s entries in $O(1)$ time, but not querying samples or norms.

\begin{algorithm}[H]
\caption{Inner product estimation} \label{alg:inner}
\begin{algorithmic}[1]
    \Statex {\bf Input:} $O(T)$-time $\SQ^\nu(x) \in \BB{C}^{n}$, $\Q(y) \in \BB{C}^n$
    \Statex {\bf Output:} an estimate of $\braket{x|y}$
    \State Let $s = 54\frac{1}{\eps^2}\log\frac{2}{\delta}$
    \State Collect measurements $i_1,\ldots,i_s$ from $\ket{x}$
    \State Let $z_j = x_{i_j}^\dagger y_{i_j}\frac{\|x\|^2}{|x_{i_j}|^2}$ for all $j \in [s]$ \Comment{$\BB{E}[z_j] = \braket{x|y}$}
    \State Separate the $z_j$'s into $6\log\frac{2}{\delta}$ buckets of size $\frac{9}{\eps^2}$, and take the mean of each bucket
    \State Output the (component-wise) median of the means
\end{algorithmic}
\end{algorithm}

From a simple analysis of the random variable $z_i$'s, we get the following result.

\begin{restatable}[{\cite[Proposition~4.2]{tang18a}}]{proposition}{inner}\label{prop:inner}
For $x,y \in \BB{C}^n$, given $\SQ^\nu(x)$ and $\Q(y)$, Algorithm~\ref{alg:inner} outputs an estimate of $\braket{x|y}$ to $(\eps+\nu+\eps\nu)\|x\|\|y\|$ error with probability $\geq 1-\delta$ in time $O(\frac{T}{\eps^2}\log\frac{1}{\delta})$.
\end{restatable}

For this protocol, quantum algorithms can achieve a quadratic speedup via amplitude estimation (but no more, by unstructured search lower bounds~\cite{bbbv97}).
To apply this to nearest-centroid, we write $wMM^\dagger w^\dagger$ as an inner product of tensors $\braket{a|b}$, where
\begin{align*}
    a &:= \sum_{i=1}^d\sum_{j=1}^{n+1}\sum_{k=1}^{n+1} M_{ji}\|M_{k,*}\| \ket{i}\ket{j}\ket{k} = M \otimes \tilde{M}; \\
    b &:= \sum_{i=1}^d\sum_{j=1}^{n+1}\sum_{k=1}^{n+1} \frac{w_j^\dagger w_kM_{ki}}{\|M_{k,*}\|} \ket{i}\ket{j}\ket{k}.
\end{align*}
Then, we show we have $\SQ$ access to one of the tensors ($a$).
With this, we see that the quadratic speedup from amplitude amplification is the only speedup that quantum nearest-centroid achieves:

\begin{theorem}[Classical Nearest-Centroid]
    Suppose we are given $O(T)$-time $\SQ(V) \in \BB{C}^{n\times d}$ and $\SQ(u) \in \BB{C}^d$.
    Then one can output a solution to Problem~\ref{prob:sc} in $O(T\frac{Z^2}{\eps^2}\log\frac{1}{\delta})$ time.
\end{theorem}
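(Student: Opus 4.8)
\emph{Proof idea.}
The plan is to estimate the target quantity $\|u-\tfrac1n\vec{1}V\|^2=\|wM\|^2=wMM^\dagger w^\dagger$ directly, by a two-sided importance-sampling estimator --- a ``dequantized swap test'' on the encoding $\ket{M}$ that averages over rows as well as columns (unlike Algorithm~\ref{alg:inner}, which averages over one index only). Expanding the quadratic form and using that $w$ is real,
\[
  \|wM\|^2=\sum_{i,j=1}^{n+1} w_i w_j\, M_{i,*}M_{j,*}^\dagger .
\]
Since $\|M\|_F^2=\|\bar u\|^2+\tfrac1n\|\bar V\|_F^2=2$, I set $p_i:=\|M_{i,*}\|^2/2$. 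One run of the estimator used by Algorithm~\ref{alg:csc} draws $i,j\in[n+1]$ independently from $p$, draws a column index $\ell$ from $\ket{M_{i,*}}$, and outputs
\[
  \widehat Z:=\frac{w_i w_j}{p_i p_j}\cdot\frac{\|M_{i,*}\|^2}{|M_{i,\ell}|^2}\, M_{i,\ell}^\dagger M_{j,\ell}
\]
(with the convention $\widehat Z:=0$ when the drawn $i$ or $j$ has $w_i=0$); Algorithm~\ref{alg:csc} then runs $s=O(\tfrac{Z^2}{\eps^2}\log\tfrac1\delta)$ independent copies and returns their median-of-means.

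The two facts to verify are $\BB{E}[\widehat Z]=\|wM\|^2$ and $\BB{E}[|\widehat Z|^2]\le 4Z^2$. The first is immediate: averaging over $\ell$ returns the Gram entry $M_{i,*}M_{j,*}^\dagger$ (the $\frac{|M_{i,\ell}|^2}{\|M_{i,*}\|^2}$ weight of the measurement cancels the $\frac{\|M_{i,*}\|^2}{|M_{i,\ell}|^2}$ factor), and then averaging over $i,j$ cancels $p_i p_j$ and reproduces the displayed sum. For the second, the $\ell$-average of $\frac{\|M_{i,*}\|^2}{|M_{i,\ell}|^2}|M_{i,\ell}|^2|M_{j,\ell}|^2$ is $\|M_{i,*}\|^2\|M_{j,*}\|^2$, so the double sum factors:
\[
  \BB{E}[|\widehat Z|^2]=\Big(\sum_{i=1}^{n+1}\frac{w_i^2\|M_{i,*}\|^2}{p_i}\Big)^{\!2}=\Big(\sum_{i=1}^{n+1}2w_i^2\Big)^{\!2}=4\|w\|^4=4Z^2 ,
\]
hence $\Var(\widehat Z)\le 4Z^2$. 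Given this variance bound, the median-of-means boosting is verbatim the analysis behind Proposition~\ref{prop:inner}: partition the $s$ samples into $O(\log\tfrac1\delta)$ buckets of size $O(Z^2/\eps^2)$, bound each bucket mean to additive error $\eps$ with probability $\ge\tfrac23$ by Chebyshev, and take the median.

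It remains to see that one run takes $O(T)$ time given only $\SQ(u)$ and $\SQ(V)$. Drawing $i\sim p$ is trivial, since $p$ puts mass $\tfrac12$ on the row $\bar u$ and is uniform on the $n$ rows $\tfrac1{\sqrt n}\bar V_{j,*}$. Drawing $\ell\sim\ket{M_{i,*}}$ is a single measurement of $\ket{u}$ (if $i$ is the $\bar u$ row) or of some $\ket{V_{j,*}}$ (otherwise), both provided by $\SQ(u)$ and $\SQ(V)$. The remaining scalars $w_i\in\{\|u\|,\,-\tilde V_j/\sqrt n\}$, $\|M_{i,*}\|\in\{1,\,1/\sqrt n\}$, the entries $M_{i,\ell},M_{j,\ell}$ (each of the form $u_\ell/\|u\|$ or $\tfrac1{\sqrt n}V_{j,\ell}/\tilde V_j$), and thus $p_i,p_j$, all follow from $O(1)$ queries. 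So a run is $O(T)$ and the total runtime is $O(T\tfrac{Z^2}{\eps^2}\log\tfrac1\delta)$, matching the quantum algorithm up to the quadratic (rather than linear) dependence on $Z/\eps$.

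The one delicate point --- and the step I expect to be the real obstacle --- is the choice of outer row-distribution. Sampling rows with $p_i\propto\|M_{i,*}\|^2$ makes the importance ratio $w_i^2\|M_{i,*}\|^2/p_i$ equal to the benign quantity $2w_i^2$, whose total over all rows is just $2\|w\|^2=2Z$. The seemingly more natural choice $p_i\propto|w_i|^2\|M_{i,*}\|^2$ --- essentially what one gets by feeding $\SQ(M)$ and $\Q(w)$ into Algorithm~\ref{alg:matvec} --- makes that ratio a constant, so its sum over the $n+1$ rows is $\Theta(n)$ times larger, inflating the second moment by $\Theta(n^2)$ and destroying the polylogarithmic runtime. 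Once the distribution is pinned down, the rest is the standard second-moment-plus-median argument already in place for Proposition~\ref{prop:inner}.
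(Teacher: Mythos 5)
Your proposal is correct, but despite your framing it is not actually a different route from the paper: the paper reduces to Proposition~\ref{prop:inner} applied to the flattened tensors $a = M\otimes\tilde{M}$ and $b$, and unrolling that reduction produces exactly your two-sided importance sampler --- the same outer distribution $p_i=\|M_{i,*}\|^2/\|M\|_F^2$ on pairs $(i,j)$, the same inner draw $\ell\sim\ket{M_{i,*}}$, the same estimator $\widehat Z$, and the same second-moment bound $\|a\|^2\|b\|^2 = \|M\|_F^4\|w\|^4 = 4Z^2$. So you have re-derived the paper's argument with the abstraction of Algorithm~\ref{alg:inner} peeled away, which is a fine sanity check but not a genuinely distinct proof; the closing remark about $p_i\propto|w_i|^2\|M_{i,*}\|^2$ blowing up by $\Theta(n)$ is also a side observation and not quite right in general (that distribution is the one Algorithm~\ref{alg:matvec} uses for a different task, namely simulating $\SQ(Vw)$, not for inner-product estimation).
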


\section{Principal Component Analysis}
We now dequantize Lloyd, Mohseni, and Rebentrost's quantum principal component analysis (QPCA) algorithm \cite{lmr14}, an influential early example of QML \cite{bwprwl17,chdprsw18}.
While the paper describes a more general strategy for Hamiltonian simulation of density matrices, their central claim is an exponential speedup in an immediate application: producing quantum states corresponding to the top principal components of a low-rank dataset \cite{lmr14}.

The setup for the problem is as follows: suppose we are given a matrix $A \in \BB{R}^{n \times d}$ whose rows correspond to data in a dataset.
We will find the principal eigenvectors and eigenvalues of $A^\dagger A$; when $A$ is a mean zero dataset, this corresponds to the top principal components.

\begin{problem}[Principal component analysis] \label{prob:pca}
Suppose we are given access to $A \in \BB{C}^{n\times d}$ with singular values $\sigma_i$ and right singular vectors $v_i$.
Further suppose we are given $\sigma$, $k$, and $\eta$ with the guarantee that, for all $i \in [k]$, $\sigma_i \geq \sigma$ and $\sigma_i^2 - \sigma_{i+1}^2 \geq \eta\|A\|_F^2$.
With probability $\geq 1-\delta$, output estimates $\hat{\sigma}_1^2,\ldots,\hat{\sigma}_k^2$ and $\hat{v}_1,\ldots,\hat{v}_k$ satisfying $|\hat{\sigma}_i^2 - \sigma_i^2| \leq \eps_\sigma\|A\|_F^2$ and $\|\hat{v}_i - v_i\| \leq \eps_v$ for all $i \in [k]$.
\end{problem}
Denote $\|A\|_F^2/\sigma^2$ by $K$.
Lloyd et al.\ get the following:
\begin{theorem} \label{thm:qpca}
Given $\|A\|_F$ and the ability to prepare copies of $\ket{A}$ in $O(T)$ time, a quantum algorithm can output the desired estimates for Problem~\ref{prob:pca} $\hat{\sigma}_1^2,\ldots,\hat{\sigma}_k^2$ and $\ket{\hat{v}_1},\ldots, \ket{\hat{v}_k}$ in $\tilde{O}(TK\min(\eps_\sigma, \delta)^{-3})$ time.
\end{theorem}

Later results \cite[Theorems~5.2, 27]{kp16,cgj18} improve the runtime here to $\tilde{O}(TK\eps_\sigma^{-1}\polylog(nd/\delta))$ when $A$ is given in QRAM.
We will compare to the original QPCA result.

To dequantize QPCA, we use a similar high-level idea to that of the quantum-inspired recommendation system \cite{tang18a}.
We begin by using a low-rank approximation algorithm, Algorithm~\ref{alg:svd}, to output a description of approximate top singular values and vectors.

Algorithm~\ref{alg:svd} finds the large singular vectors of $A$ by reducing its dimension down to $W$, whose singular value decomposition we can compute quickly.
Then, $S, \hat{U}, \hat{\Sigma}$ define approximate large singular vectors $\hat{V} := S^\dagger \hat{U}\hat{\Sigma}^{-1}$.
The full set of guarantees on the output of Algorithm~\ref{alg:svd} are in the supplemental material~\cite{Note1}, but in brief, for the right setting of parameters, the columns of $\hat{V}$ and the diagonal entries of $\hat{\Sigma}$ satisfy the desired constraints for our $\hat{v}_i$'s and $\hat{\sigma}_i$'s in Problem~\ref{prob:pca}.
The $\hat{\sigma}_i$'s are output explicitly, but the $\hat{v}_i$'s are described implicitly: $\hat{v}_i = S^\dagger \hat{U}_{*,i} / \hat{\sigma}_{i}$.
We have $O(T)$-time $\SQ(S)$ because all rows are normalized, and rows of $S$ are simply rows of $A$.
Thus, sampling from $\tilde{S}$ is a uniform sample from $[q]$ and sampling from $S_{i,*}$ is sampling from a row of $A$.
$\hat{U}_{*,i}$ is an explicit vector, so in essence, we need $\SQ$ access to a linear combination of vectors, each of which we have $\SQ$ access to.

\begin{algorithm}[H]
\caption{Low-rank approximation~\cite{fkv98}} \label{alg:svd}
\begin{algorithmic}[1]
    \Statex {\bf Input:} $O(T)$-time $\SQ(A) \in \BB{R}^{m\times n}$, $\sigma$, $\eps$, $\delta$
    \Statex {\bf Output:} $\SQ(S) \in \BB{C}^{\ell \times n}, \Q(\hat{U}) \in \BB{C}^{q \times \ell}, \Q(\hat{\Sigma}) \in \BB{C}^{\ell \times \ell}$
    \State Set $K = \|A\|_F^2/\sigma^2$ and $q = \Theta\big(\frac{K^4}{\eps^2}\log(\frac1\delta)\big)$
    \State Sample rows $i_1,\ldots,i_q$ from $\tilde{A}$ and define $S \in \BB{R}^{q \times n}$ such that $S_{r,*} := A_{i_r,*}\frac{\|A\|_F}{\sqrt{q}\|A_{i_r,*}\|}$
    \State Sample columns $j_1,\ldots,j_q$ from $\C{F}$, where $\C{F}$ denotes the distribution given by sampling a uniform $r \sim [q]$, then sampling $c$ from $S_r$.
    \State Let $W \in \BB{C}^{q\times q}$ be the normalized submatrix $W_{*,c} := \frac{S_{*,j_c}}{q\C{F}(j_c)}$
    \State Compute the left singular vectors of $W$ $\hat{u}^{(1)},\ldots,\hat{u}^{(\ell)}$ that correspond to singular values $\hat{\sigma}^{(1)},\ldots,\hat{\sigma}^{(\ell)}$ larger than $\sigma$
    \State Output $\SQ(S)$, $\hat{U} \in \BB{R}^{q\times \ell}$ the matrix with columns $\hat{u}^{(i)}$, and $\hat{\Sigma} \in \BB{R}^{\ell\times \ell}$ the diagonal matrix with entries $\hat{\sigma}^{(i)}$.
\end{algorithmic}
\end{algorithm}

Algorithm~\ref{alg:matvec} does exactly this: it uses rejection sampling to dequantize the swap test over a subset of qubits (getting $\ket{Vw}$ via $\bra{V}(\ket{w} \otimes I)$).

\begin{algorithm}[H]
\caption{Matrix-vector SQ access} \label{alg:matvec}
\begin{algorithmic}[1]
    \Statex {\bf Input:} $O(T)$-time $\SQ(V^\dagger) \in \BB{C}^{k\times n}$, $\Q(w) \in \BB{C}^k$
    \Statex {\bf Output:} $\SQ^\nu(Vw)$
    \Function{RejectionSample}{$\SQ(V^\dagger),\Q(w)$} \label{proc:rejsamp}
    \State Sample $i \in [k]$ proportional to $|w_i|^2\|V_{*,i}\|^2$ by manually calculating all $k$ probabilities
    \State Sample $s \in [n]$ from $V_{*,i}$ using $\SQ(V^\dagger)$
    \State Compute $r_s = (Vw)_s^2/(k\sum_{j=1}^k(V_{sj}w_j)^2)$ (after querying for $w_j$ and $V_{sj}$ for all $j \in [k]$)
    \State Output $s$ with probability $r_s$ (success); otherwise, output $\varnothing$ (failure)
    \EndFunction
    \State \textsc{Query}: output $(Vw)_s$
    \State \textsc{Sample}: run \textsc{RejectionSample} until success (outputting $s$) or $kC(V,w)\log\frac1\delta$ failures (outputting $\varnothing$)
    \State \textsc{Norm}($\nu$): Let $p$ be the fraction of successes from running \textsc{RejectionSample} $\frac{k}{\nu^2}C(V,w)\log\frac1\delta$ times; output $pk\sum_{i=1}^k |w_i|^2\|V_{*,i}\|^2$
\end{algorithmic}
\end{algorithm}

\begin{restatable}[{\cite[Proposition~4.3]{tang18a}}]{proposition}{matvec}\label{prop:matvec}
For $V \in \BB{C}^{n\times k}, w \in \BB{C}^k$, given $\SQ(V^\dagger)$ and $\Q(w)$, Algorithm~\ref{alg:matvec} simulates $\SQ^\nu(Vw)$ where the time to query is $O(Tk)$, sample is $O(Tk^2C(V,w)\log\frac{1}{\delta})$, and query norm is $O(Tk^2C(V,w)\frac{1}{\nu^2}\log\frac{1}{\delta})$.
Here, $\delta$ is the desired failure probability and $C(V,w) = \sum \|w_iV_{*,i}\|^2/\|Vw\|^2$.
\end{restatable}

In general, $C(V,w)$ may be arbitrarily large, but in this application it is $O(K)$.
Quantum algorithms achieve a speedup here when $k$ is large and $C(V,w)$ is small, such as when $V$ is a high-dimensional unitary, confirming our intuition that unitary operations are hard to simulate classically.

Altogether, we get our desired result.
\begin{theorem} \label{thm:cpca}
Given $O(T)$-time $\SQ(A) \in \BB{C}^{n\times d}$, with $\eps_\sigma, \eps_v,\delta \in (0,0.01)$, there is an algorithm that output the desired estimates for Problem~\ref{prob:pca} $\hat{\sigma}_1,\ldots,\hat{\sigma}_k$ and $O(T\frac{K^9}{\eps^4}\log^3(\frac k\delta))$-time $\SQ^{0.01}(\hat{v}_1,\ldots,\hat{v}_k)$ in $O(\frac{K^{12}}{\eps^6}\log^3(\frac k\delta)+T\frac{K^{8}}{\eps^4}\log^2(\frac k\delta))$ time, where $\eps = \min(0.1\eps_\sigma K^{1.5},\eps_v^2\eta,\frac14K^{-1/2})$.
\end{theorem}
Under the non-degeneracy condition $\eta \leq \frac14K^{-1/2}$, this runtime is $\tilde{O}(T\frac{K^{12}}{\eps_\sigma^6\eps_v^{12}}\log^3(\frac{1}{\delta}))$.
While the classical runtime depends on $\eps_v$, note that a quantum algorithm must \emph{also} incur this error term to learn about $v_i$ from copies of $\ket{v_i}$.
For example, computing entries or expectations of observables of $v_i$ given copies of $\ket{v_i}$ requires $\poly(\frac{1}{\eps_v})$ or $\poly(n)$ time.

\section{Discussion}

We have introduced the SQ access assumption as a classical analogue to the QML state preparation assumption and demonstrated two examples where, in this classical model, we can dequantize QML algorithms with ease.
We now discuss the implications of this work with respect to related literature.

A natural question is of this work's relation to classical literature: does this work improve on classical algorithms for linear algebra in any regime?
The answer may be no, for a subtle but fundamental reason: recall that our main idea is to introduce an input model \emph{strong enough} to give classical versions of QML while being \emph{weak enough} to extend to settings like QRAM, where classical computers can only access the input in very limited ways.
In particular, the SQ access model that we study is \emph{weaker} than the typical input model used for classical sketching algorithms~\cite{Mahoney11,Woodruff14,kv17}.
$O(T)$-time algorithms in the quantum-inspired access model are $\tilde{O}(\textrm{nnz}+T)$-time algorithms in the usual RAM model (where $\textrm{nnz}$ is the number of nonzero entries of the input), but not vice versa: typical sketching algorithms can exploit better data structures provided they only take $O(\textrm{nnz})$ time (e.g.\ oblivious sketches), whereas the quantum-inspired model can only use the QRAM data structure.
The crucial insight of this work is that some algorithms (such as Algorithm~\ref{alg:svd} of Frieze et al.\ \cite{fkv98}) generalize to the weaker quantum-inspired model.
Our algorithms give exponential speedups in the quantum-inspired setting, but since the model is weaker, one might expect that they perform worse in typical settings for classical computation (see \cite{adbl20}).
These model considerations also explain why we use Frieze et al.~\cite{fkv98}: to our knowledge, this algorithm is the only one from the classical literature that naturally generalizes to the $\SQ$ input model.

The closest analogue to these results and techniques is a work by Van den Nest on probabilistic quantum simulation \cite{vdNest11}, which describes a notion of ``computationally tractable'' (CT) states that corresponds to our notion of $\SQ$ access for vectors.
With this notion, the author describes special types of circuits on CT states where weak simulation is possible, using variants of Propositions~\ref{prop:inner}~and~\ref{prop:matvec}.
However, Van den Nest's work does not have a version of Algorithm~\ref{alg:svd}, since this technique only runs quickly on low-rank matrices, making it ineffective on generic quantum circuits.
We exploit this low-rank structure for efficient quantum simulation of a small-but-practically-relevant class of circuits: quantum linear algebra on data with low-rank structure.
So, our techniques used for supervised clustering are within the scope of Van den Nest's work, whereas our techniques for PCA are new to this line of work.

These techniques are not new to quantum simulation in general.
Others have considered applying randomized numerical linear algebra to quantum simulation~\cite{rwcrps18}, but have not made the connection towards dequantizing quantum algorithms, especially in large generality.
Low-rank approximation is crucial for tensor network simulations of quantum systems~\cite{Schollwock11,Orus14}, where simulation can be done efficiently provided the input is, say, a matrix product state with low tensor rank.
In this context, low-rank approximation is often performed exactly and only on a subset of the space, instead of approximately done on the full state, as is done here.
This reflects the fact that tensor network algorithms assume that the system is reasonably approximated by a tensor network and aims to work well in practice, whereas our ``dequantized'' algorithms must work on a broader class of input and prioritizes provable guarantees in an abstract computational model over real-world performance.
Nevertheless, some of these dequantized algorithms might be able to be matched by tensor network contraction techniques, when the input has low \emph{tensor} rank.
See the supplemental material for further discussion of this comparison~\cite{Note1}.

Since this work, numerous follow-ups have cemented the significance of the $\SQ$ access model introduced here \cite{cglltw20,cllw20,cglltw19,jgs20}.
In particular, a recent work \cite{cglltw19} essentially dequantizes the singular value transformation framework of Gilyen et al.\ \cite{gslw18} when input is given in QRAM.
These works use fundamentally the same techniques to dequantize a wide swathe of low-rank quantum machine learning---an exciting step forward in understanding QML.


\appendix

\begin{acknowledgments}
Thanks to Ronald de Wolf for giving the initial idea to look at QPCA.
Thanks to Nathan Wiebe for helpful comments on this document.
Thanks to Daniel Liang and Patrick Rall for their help fleshing out these ideas and reviewing a draft of this document.
Thanks to Scott Aaronson for helpful discussions.
This material is based upon work supported by the National Science Foundation Graduate Research Fellowship Program under Grant No. DGE-1762114.
\end{acknowledgments}

%
\end{document}


\title{Supplemental Material for ``Quantum principal component analysis only achieves an exponential speedup because of its state preparation assumptions''}


\author{Ewin Tang}
\email[]{ewint@cs.washington.edu}
\homepage[]{ewintang.com}
\affiliation{University of Washington}

\date{\today}

\maketitle

\section{Motivating sample and query access}

In this section, we demonstrate how typical input models where efficient state preparation oracles are possible also admit efficient sample and query access to input.
Throughout, we will try to be precise about time complexities, but note that for classical algorithms we use the word RAM model of classical computation, which suppresses some $\log(n)$ factors (e.g.\ for reading indices) that are not conventionally suppressed for quantum circuit complexity.

For all these settings, the input vector $x \in \BB{C}^n$ is given \emph{classically}, meaning that there is some way to efficiently compute $x_i$ for all $i \in [n]$.
This is typical for the QML literature, and makes sense practically in the context of including QML into a classical ML pipeline.
However, we note that this assumption is crucial for our techniques, and when it is no longer satisfied (say, the input comes from a quantum system), we cannot get $\SQ$ access to input.

\paragraph{QRAM.}
Quantum random access memory is a proposal to implement state preparation of an $n$-dimensional quantum state in time polynomial in $\log n$ through the use of a clever data structure and parallelization \cite{glm08,Pre18,Pra14,Aar15}.
If the same input is reused or dynamically updated \cite{kp16}, using QRAM for QML can amortize the cost of data loading, removing the input bottleneck of these algorithms in certain cost models.
In our previous work \cite[3.1, 3.2]{tang18a}, we noted that this gives sample and query access to input as well.

\paragraph{Grover-Rudolph state preparation.}
Suppose we have an efficient way to apply an integration function on the state $x \in \BB{C}^n$ we wish to prepare: $\C{I}(s,t) = \sum_{i=s}^t |x_i|^2$.
Then, Grover-Rudolph state preparation \cite{grover02} can perform the rotations done in a QRAM protocol without issues with parallelization, only assuming the ability to query entries of $x$ in superposition.
So, one can prepare $\ket{x}$ with $O(\log(n))$ applications of the integration oracle.

Given an efficient integration oracle, we can also gain sample and query access to $x$: because all of the nodes of a QRAM can be expressed as an integral of $x$ over some interval, we can simply perform that sample and query access protocol, replacing calls to the data structure with calls to the integration oracle.
If integrating takes $O(T)$ time, this gives $O(T\log n)$-time $\SQ(x)$, the equivalent result to the quantum setting.


\paragraph{Sparsity.}
When $x \in \BB{C}^n$ has only $s$ nonzero entries, and we know their locations, we can prepare $\ket{x}$ in $O(s\log n)$ gates.
We can also get $\SQ(x)$ by querying all of the nonzero entries to compute the corresponding distribution for sampling and norm in $O(s)$ time.

\paragraph{Uniformity.}
When entries of $x$ are close to uniform, the corresponding state can be prepared given the ability to query $x$ in superposition.
Classically, samples and norm estimates can be found quickly using rejection sampling: sample $i \in [n]$ uniformly at random, and choose it with probability $\frac{n|x_i|^2}{C\|x\|^2}$, where $C$ is chosen so that $C\geq \max_{i \in n} \frac{n|x_i|^2}{\|x\|^2}$; otherwise, restart.

\paragraph{Generic state preparation.}
Finally, we can generically convert a state preparation procedure to sample and query access: given input vectors $x \in \BB{C}^n$ classically, norms $\|x\|$, and a quantum procedure to create copies of $\ket{x}$, we can get $\SQ(x)$ by measuring $\ket{x}$ in the computational basis.
This could apply to cases where preparing $\ket{x}$ is feasible, but a full QRAM algorithm is impractical due to the space and time overhead inherent to current QRAM proposals (e.g.\ from storing magic states or error-correcting).

\section{Relation to Tensor Networks}

Since this work can be seen as a form of quantum simulation, it makes sense to compare and contrast the algorithms presented here to tensor network algorithms.
Although the most direct analogue of our work, van den Nest's paper~\cite{vdNest11}, does not use low-rank approximation, it is a typical subroutine in the tensor networks context.

As a reminder, tensor network algorithms work by using tensor network representations of quantum states, such as matrix product states (MPS) or projected entangled pair states (PEPS).
For a small-but-practically-relevant set of quantum states, such as the ground states of one-dimensional quantum spin systems, this tensor network representation is highly space-efficient~\cite{vc06,Orus14}.
Since this representation is also fairly \emph{time}-efficient, particularly excelling for 1D tensor networks like MPS, finding and analyzing these ground states through techniques like the density-matrix renormalization group method (DMRG)~\cite{White92,Schollwock11} and time-evolving block decimation (TEBD)~\cite{Vidal03,Vidal04} is surprisingly tractable.
Specifically, in this 1D case, many tensor network manipulations can be done in time polynomial in the \emph{bond dimension} of the tensor network (sometimes called the \emph{tensor rank})~\cite{clopzm16}.
These tensor network algorithms still work even when the states being studied cannot be exactly represented with a small tensor network, provided they can still be closely \emph{approximated} by a tensor network of the desired shape.
This approximation is typically done through iterative low-rank approximation on local pieces of the state, as is done in the aforementioned DMRG and TEBD algorithms~\cite{Schollwock11,bc17}.
This technique is heuristic, though: since NP-hard problems can be encoded into the ground states of local Hamiltonians, algorithms that find MPS approximations to ground states must perform poorly on some inputs~\cite{Eisert06}.

We can observe that tensor network algorithms, at least naively, do not suffice to recover the results described here.
First, our algorithms require far weaker assumptions: instead of assuming the input state can be represented as an MPS with low \emph{tensor rank}, we only assume that the input mixed state has a density matrix with low \emph{linear algebraic rank}.
For example, the results presented here support simulating QPCA applied to arbitrary pure states (though this is true for trivial reasons, since the only principal component of a pure state is the pure state itself) and probability distributions over a constant number of pure states.
A tensor network algorithm might run into issues representing this pure state, since without any other restrictions, one needs an exponential number of parameters to specify it.
Of course, one could perform repeated low-rank approximation to an input state to force it into, say, an MPS, but because this is done without considering the structure of the input state, this would likely incur large error.
Our model bypasses these issues of representation by assuming an efficient representation of the input already exists in the form of $\SQ$ access: our algorithm works by observing that a representation for the output of QPCA can be found efficiently classically if it's allowed to depend on the input's representation.
With our model abstracting away representation details, these classical analogues can run in time independent of any notion of bond dimension.

In summary: our algorithms can be run efficiently on the set of low-rank states, and this set is much larger than the set of states that are efficiently represented by tensor networks.
This is not because the classical algorithms are particularly good compared to state-of-the-art tensor network methods, but because the $\SQ$ model abstracts away the difficulties that arise from representing states.
One might be able to use the tensor network literature to prove that, say, QPCA on matrix product states can be simulated classically, but these results work in a far more general regime.
This is the main advantage of the $\SQ$ model: the wide generality in which it applies.

Though tensor network algorithms and the dequantized algorithms presented here have drastically different goals, we can still compare how the differences in their desiderata affect their application and analysis of low-rank approximation.
Because tensor network algorithms perform repeated low-rank approximation on different parts of the space, the iterative algorithm that results generally has no provable guarantees, especially on general inputs, but performs exceptionally well in practice for the settings where this low-rank approximation makes sense physically~\cite{lvv15}.
The algorithms presented here are not intended to be used in practice, since (as discussed in the main text) we developed them with the intention of being barriers for asymptotic speedup for quantum machine learning algorithms.
Instead, we want provable guarantees, which our algorithm for QPCA achieves by performing low-rank approximation only once, which produces low error for precisely the class of input density matrices under consideration.

\section{Details for Nearest-Centroid Classification}

Now, we give proofs to propositions and theorems stated in the main text, beginning with the dequantization of nearest-centroid classification.
Several of these results are versions of protocols from prior work~\cite{tang18a}; we provide proofs for completeness, but encourage readers to look there for more details.

\subsection{Proof of Proposition~\ref{prop:inner}}

\begin{algorithm}[H]
\caption{Inner product estimation} \label{alg:inner}
\begin{algorithmic}[1]
    \Statex {\bf Input:} $O(T)$-time $\SQ^\nu(x) \in \BB{C}^{n}$, $\Q(y) \in \BB{C}^n$
    \Statex {\bf Output:} an estimate of $\braket{x|y}$
    \State Let $s = 54\frac{1}{\eps^2}\log\frac{2}{\delta}$
    \State Collect measurements $i_1,\ldots,i_s$ from $\ket{x}$ \label{line:meas}
    \State Let $z_j = x_{i_j}^\dagger y_{i_j}\frac{\|x\|^2}{|x_{i_j}|^2}$ for all $j \in [s]$ \Comment{$\BB{E}[z_j] = \braket{x|y}$}
    \State Separate the $z_j$'s into $6\log\frac{2}{\delta}$ buckets of size $\frac{9}{\eps^2}$, and take the mean of each bucket
    \State Output the (component-wise) median of the means
\end{algorithmic}
\end{algorithm}

\setcounter{theorem}{2}
\begin{restatable}[{\cite[Proposition~4.2]{tang18a}}]{proposition}{inner}\label{prop:inner}
For $x,y \in \BB{C}^n$, given $\SQ^\nu(x)$ and $\Q(y)$, Algorithm~\ref{alg:inner} outputs an estimate of $\braket{x|y}$ to $(\eps+\nu +\eps\nu)\|x\|\|y\|$ error with probability $\geq 1-\delta$ in time $O(\frac{T}{\eps^2}\log\frac{1}{\delta})$.
\end{restatable}

\begin{proof}
    We analyze Algorithm~\ref{alg:inner}.
    Computing the $z_i$'s, means, and medians all take linear time, so the runtime is dominated by the collection of samples, Line~\ref{line:meas}, as desired.

    As for correctness, first consider the case that $\nu = 0$.
    Then the $z_j$'s satisfy that $\BB{E}[z_j] = \braket{x|y}$ and $\Var[z_j] \leq \|x\|^2\|y\|^2$.
    Taking the mean of $\frac{9}{\eps^2}$ copies of $z_j$ reduces the variance to $\eps^2\|x\|^2\|y\|^2/9$, so by Chebyshev's inequality, the probability that the mean is $\frac{\eps}{\sqrt{2}}\|x\|\|y\|$-far from $\braket{x|y}$ is $\leq \frac{2}{9}$.

    Next, we treat the real and imaginary components of the means separately, and take the median of the means component-wise.
    To show that this median of means satisfies the desired error bound, notice that if half of the $6\log\frac2\delta$ means are within $\frac{\eps}{\sqrt{2}}\|x\|\|y\|$ of $\braket{x|y}$ in the real axis, then the median of these $6\log\frac2\delta$ means is also within $\frac{\eps}{\sqrt{2}}\|x\|\|y\|$ of $\braket{x|y}$ in the real axis.
    Let $\mathcal{E}_i$, for $i \in [6\log\frac{2}{\delta}]$, be the random variable that is 0 if the $i$th mean is within $\frac{\eps}{\sqrt{2}}\|x\|\|y\|$ of $\braket{x|y}$ in the real axis, and 1 otherwise.
    Further, let $Z_i$, for $i \in [6\log\frac{2}{\delta}]$, be independent random Bernoulli variables that are 1 with probability $\frac29$ and 0 otherwise.
    We previously established that $\Pr[\mathcal{E}_i = 1] \leq \frac29$, so by a Hoeffding-Chernoff bound~\cite{Hoeffding94}, 
    \begin{align*}
        &\Pr[\textstyle\sum_{i=1}^{6\log\frac{2}{\delta}} \mathcal{E}_i \geq 3\log\frac{2}{\delta}] \\
        &\leq \Pr[\textstyle\sum_{i=1}^{6\log\frac{2}{\delta}} Z_i \geq 3\log\frac2\delta] \\
        &\leq \Big(\Big(\frac{2/9}{1/2}\Big)^{1/2}\Big(\frac{7/9}{1/2}\Big)^{1/2}\Big)^{6\log(2/\delta)}
        \leq \frac{\delta}{2}.
    \end{align*}
    We get the same bound for the imaginary component, and combine the two to get the desired correctness property.

    When $\nu > 0$, we cannot query for $\|x\|^2$ exactly, only for an estimate $\bar{x} \in (1\pm \nu)\|x\|^2$.
    Consider the output of Algorithm~\ref{alg:inner}, where we replace the $\|x\|^2$ with $\bar{x}$, and call this output $\alpha$.
    The output satisfies $|\frac{\|x\|^2}{\bar{x}}\alpha - \braket{x|y}| \leq \eps\|x\|\|y\|$ with probability $\geq 1-\delta$.
    Re-arranging, we can achieve the bound
    \begin{align*}
        |\alpha - \braket{x|y}|
        &\leq |\alpha - \tfrac{\bar{x}}{\|x\|^2}\braket{x|y}| + |(\tfrac{\bar{x}}{\|x\|^2} - 1)\braket{x|y}| \\
        &\leq (1+\nu)\eps\|x\|\|y\| + \nu|\braket{x|y}| \\
        &\leq ((1+\nu)\eps + \nu)\|x\|\|y\|.
    \end{align*}
    Note that having $\SQ(y)$ instead of $\Q(y)$ improves the scaling by no more than constant factors.
\end{proof}

We asserted in the main text that, for this protocol, quantum algorithms can achieve a quadratic speedup via amplitude estimation but no more.
This holds because a quantum algorithm for inner product estimation scaling better than $\frac{1}{\eps}$ would imply the ability to detect the existence of a marked item with faster than $\sqrt{N}$ oracle queries, contradicting unstructured search lower bounds~\cite{bbbv97}.

\subsection{Proof of Theorem~\ref{thm:csc}}

We first recall the problem we wish to solve and present the algorithm that solves it.

\setcounter{theorem}{0}
\begin{problem}[Centroid distance] \label{prob:sc}
Suppose we are given access to $V \in \BB{C}^{n\times d}$ and $u \in \BB{C}^d$.
Estimate $\|u - \frac{1}{n}\vec{1} V\|^2$ to $\eps$ additive error with probability $\geq 1-\delta$.
\end{problem}

\setcounter{algorithm}{3}
\begin{algorithm}[H]
    \caption{Classical supervised clustering} \label{alg:csc}
\begin{algorithmic}[1]
    \Statex {\bf Input:} $O(T)$-time $\SQ(V_1,\ldots,V_n, u)$
    \Statex {\bf Output:} An estimate of $\lambda = \|u - \frac{1}{n}V\vec{1}\|^2$
    \State Achieve $O(T)$-time $\SQ(a)$, $\Q(b)$ for $a$, $b$ as in (\ref{eqn:a}), (\ref{eqn:b})
    \State Run Algorithm~\ref{alg:inner} to estimate $\braket{a|b}$ to $\eps$ error and output the result

\end{algorithmic}
\end{algorithm}

\setcounter{theorem}{3}
\begin{theorem}[Classical Nearest-Centroid] \label{thm:csc}
    Suppose we are given $O(T)$-time $\SQ(V) \in \BB{C}^{n\times d}$ and $\SQ(u) \in \BB{C}^d$.
    Then Algorithm~\ref{alg:csc} outputs a solution to Problem~\ref{prob:sc} in $O(T\frac{Z^2}{\eps^2}\log\frac{1}{\delta})$ time.
\end{theorem}

\begin{proof}
Recall from the main text that the quantity we wish to estimate can be written as $wMM^\dagger w^\dagger$, where
$$
    M := \begin{bmatrix}\bar{u} \\ \frac{1}{\sqrt{n}}\bar{V}\end{bmatrix} \text{ and }
    w := \begin{bmatrix}\|u\| & -\frac{1}{\sqrt{n}}\tilde{V}\end{bmatrix}.
$$
First, notice that we have $O(T)$-time $\SQ(M)$ and $\Q(w)$: $M$'s rows are rescaled vectors that we have SQ access to; and $\tilde{M} := [1\;\frac{1}{\sqrt{n}}\;\cdots\;\frac{1}{\sqrt{n}}]$, so we have $\SQ(\tilde{M})$ trivially; and $w$'s entries can be queried using our given access.
Next, use that $wMM^\dagger w^\dagger = \braket{a|b}$ for $a,b$ the flattened tensors
\begin{align}
    a &:= \sum_{i=1}^d\sum_{j=1}^{n+1}\sum_{k=1}^{n+1} M_{ji}\|M_{k,*}\| \ket{i}\ket{j}\ket{k} = M \otimes \tilde{M}; \label{eqn:a} \\
    b &:= \sum_{i=1}^d\sum_{j=1}^{n+1}\sum_{k=1}^{n+1} \frac{w_j^\dagger w_kM_{ki}}{\|M_{k,*}\|} \ket{i}\ket{j}\ket{k}. \label{eqn:b}
\end{align}
Given $O(T)$-time $\SQ(M)$ and $\Q(w)$, we have $O(T)$-time $\SQ(a)$ and $\Q(b)$: namely, we can sample from $a$ by sampling $j$ and $k$ from $\tilde{M}$, and then sampling $i$ from $M_{j,*}$.
Thus, we can apply Proposition~\ref{prop:inner} to estimate $w^TM^TMw$ to $\eps(4Z)$ error with probability $1-\delta$ in $O(T\frac{1}{\eps^2}\log\frac{1}{\delta})$ time, using that $\|a\| = \|M\|_F^2 = 4$ and $\|b\| = \|w\|^2 = Z$.
Rescaling $\eps$ by $4Z$ gives the desired result.
\end{proof}

\section{Details for Principal Component Analysis}

\subsection{Proof of Theorem~\ref{thm:qpca}}

We recall the problem that QPCA solves.

\setcounter{theorem}{4}
\begin{problem}[Principal component analysis] \label{prob:pca}
Suppose we are given access to $A \in \BB{C}^{n\times d}$ with singular values $\sigma_i$ and right singular vectors $v_i$.
Further suppose we are given $\sigma$, $k$, and $\eta$ with the guarantee that, for all $i \in [k]$, $\sigma_i \geq \sigma$ and $\sigma_i^2 - \sigma_{i+1}^2 \geq \eta\|A\|_F^2$.
With probability $\geq 1-\delta$, output estimates $\hat{\sigma}_1^2,\ldots,\hat{\sigma}_k^2$ and $\hat{v}_1,\ldots,\hat{v}_k$ satisfying $|\hat{\sigma}_i^2 - \sigma_i^2| \leq \eps_\sigma\|A\|_F^2$ and $\|\hat{v}_i - v_i\| \leq \eps_v$ for all $i \in [k]$.
\end{problem}

\setcounter{theorem}{5}
\begin{theorem} \label{thm:qpca}
Given $\|A\|_F$ and the ability to prepare copies of $\ket{A}$ in $O(T)$ time, a quantum algorithm can output the desired estimates for Problem~\ref{prob:pca} $\hat{\sigma}_1^2,\ldots,\hat{\sigma}_k^2$ and $\ket{\hat{v}_1},\ldots, \ket{\hat{v}_k}$ in $\tilde{O}(TK\min(\eps_\sigma, \delta)^{-3})$ time.
\end{theorem}

The quantum algorithm we use is QPCA~\cite{lmr14}, and in particular, we use Prakash's analysis of Lloyd et al's QPCA:

\begin{proposition*}[3.2.1~\cite{Pra14}] \label{prop:lmr}
    Let $\rho$ be a density matrix with eigenvalues $\lambda_i$ and eigenvectors $v_i$.
    Given the ability to prepare states with density matrix $\rho$ in $T$ time, Lloyd et al's PCA algorithm~\cite{lmr14} runs in time $\tilde{O}(T\min(\eps_\sigma, \delta)^{-3})$ and returns a sample $(\ket{v},\bar{\lambda})$ where $\Pr[\ket{v}=\ket{v_j}] = \lambda_j$ and $\bar{\lambda} \in \lambda_j \pm \eps_\sigma/4$ with probability at least $1-\delta$.
\end{proposition*}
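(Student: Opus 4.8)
The plan is to recognize Lloyd et al.'s algorithm as ordinary phase estimation applied to the unitary $e^{-i\rho t}$, which is itself only implemented \emph{approximately} via the density-matrix-exponentiation primitive, and then to balance the two resulting error sources — the intrinsic phase-estimation error and the Hamiltonian-simulation error — against the parameter $\gamma := \min(\eps_\sigma,\delta)$.

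First I would set up density matrix exponentiation. Given $N$ fresh copies of $\rho$ and a work register in state $\sigma$, repeatedly applying the infinitesimal partial swap, which satisfies $\Tr_2\big(e^{-iS\Delta t}(\sigma\otimes\rho)e^{iS\Delta t}\big) = \sigma - i\Delta t[\rho,\sigma] + O(\Delta t^2)$ with $S$ the swap operator, for $N$ steps of size $\Delta t = t/N$ realizes a channel that is $O(t^2/N)$-close in diamond norm to $\sigma\mapsto e^{-i\rho t}\sigma e^{i\rho t}$; equivalently, $N = O(t^2/\epsilon)$ copies of $\rho$ suffice to simulate evolution under $\rho$ for time $t$ to diamond-norm error $\epsilon$. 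Next I would run textbook phase estimation on a single copy of $\rho$ used as the input eigenstate register, with this simulated $e^{-i\rho}$ as the unitary. Writing $\rho = \sum_j \lambda_j\ket{v_j}\bra{v_j}$, feeding the mixed state $\rho$ into phase estimation is equivalent to feeding $\ket{v_j}$ with probability $\lambda_j$; on $\ket{v_j}$, an ideal phase estimation using the controlled powers $e^{-i\rho 2^k}$ for $k=0,\dots,m-1$ with $m = \log(1/\eps_\sigma) + O(1)$ outputs $\bar\lambda$ with $|\bar\lambda-\lambda_j|\le\eps_\sigma/4$ with constant probability, and has total evolution time $t_{\max} = \sum_k 2^k = O(2^m) = O(1/\eps_\sigma)$; amplifying the estimate's confidence to $1-\delta$ (e.g.\ by re-running phase estimation on the undisturbed post-measurement eigenstate $\ket{v}$ and taking a median, or by adding precision bits) costs only a $\polylog(1/\delta)$, resp.\ $\poly(1/\delta)$, overhead in evolution time.

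Then I would account for the simulation error: replace each ideal $e^{-i\rho 2^k}$ by its simulated version with diamond-norm error $\gamma/m$, so by the triangle inequality over the $m$ invocations the realized channel is $O(\gamma)$-close in diamond norm to the ideal circuit, hence the output distribution is within $O(\gamma)\le O(\delta)$ in total variation of the ideal one; the input register still collapses to $\ket{v_j}$ with probability $\lambda_j\pm O(\gamma)$, and the measured $\bar\lambda$ faithfully reports that collapsed eigenvalue. Absorbing the $O(\gamma)$ perturbation and the $O(\gamma)$ intrinsic failure into the budget (rescaling $\delta$ by a constant) yields the claimed output law $\Pr[\ket v=\ket{v_j}]=\lambda_j$ with $\bar\lambda\in\lambda_j\pm\eps_\sigma/4$ with probability $\ge 1-\delta$. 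For the runtime, the number of copies of $\rho$ consumed is dominated by $\sum_{k=0}^{m-1} O\!\big(4^k\, m/\gamma\big) = \tilde O(4^m/\gamma) = \tilde O(t_{\max}^2/\gamma) = \tilde O(1/\gamma^3)$ (the amplification factor contributes only $\polylog$), each copy prepared in time $T$ and each phase-estimation step costing $\polylog$ gates, for a total of $\tilde O(T\gamma^{-3}) = \tilde O\big(T\min(\eps_\sigma,\delta)^{-3}\big)$.

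The main obstacle I anticipate is the error bookkeeping: I need that the diamond-norm simulation error composes additively across the $\log(1/\eps_\sigma)$ distinct powers $e^{-i\rho 2^k}$ (and across any amplifying repetitions), and that an $O(\gamma)$ diamond-norm error in the overall circuit perturbs the joint measurement statistics of $(\ket v,\bar\lambda)$ by only $O(\gamma)$ — in particular that the measured eigenvalue still identifies which $\ket{v_j}$ the input register has collapsed to, so the stated law holds. The choice $\gamma = \min(\eps_\sigma,\delta)$ is exactly what is forced: accuracy $\eps_\sigma/4$ demands $t_{\max} = \Omega(1/\eps_\sigma)$, while keeping the cumulative simulation-plus-intrinsic error below the target failure probability demands error $O(\delta)$, and the density-matrix-exponentiation cost $t_{\max}^2/(\text{error})$ then produces the cubic dependence.
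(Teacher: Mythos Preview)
The paper does not prove this proposition: it is quoted verbatim as Proposition~3.2.1 of Prakash's thesis~\cite{Pra14} and used as a black box in the derivation of Theorem~8. There is therefore no ``paper's own proof'' to compare your proposal against.

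That said, your reconstruction is the standard one and matches what Lloyd--Mohseni--Rebentrost and Prakash actually do: density-matrix exponentiation via infinitesimal partial swaps to approximate $e^{-i\rho t}$ at cost $O(t^2/\epsilon)$ copies of $\rho$, followed by phase estimation with $\rho$ itself in the eigenstate register so that the outcome $\ket{v_j}$ occurs with probability $\lambda_j$. The cubic $\min(\eps_\sigma,\delta)^{-3}$ arises exactly as you say, from $t_{\max}^2/\epsilon_{\text{sim}}$ with $t_{\max}=\Theta(1/\eps_\sigma)$ and $\epsilon_{\text{sim}}=\Theta(\delta)$, then coarsening both to $\gamma=\min(\eps_\sigma,\delta)$. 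Your identification of the delicate point---that after the phase register is measured the work register has (approximately) collapsed to a fixed $\ket{v_j}$, so one may re-estimate on that same state to amplify confidence without disturbing the sampling law $\Pr[\ket{v}=\ket{v_j}]=\lambda_j$---is correct and is indeed where the bookkeeping is most fragile. The diamond-norm composition you invoke is standard and suffices. In short: your proposal is sound, and there is nothing in the present paper to contrast it with.
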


In other words, QPCA run on states with density matrix $\frac{1}{\|A\|_F^2}A^\dagger A$ outputs a random singular vector $\ket{v_i}$ with probability proportional to its singular value $\sigma_i^2$.
As we assume our eigenvalues have an $\eta\|A\|_F^2$ gap, the precise eigenvector $\ket{v_j}$ sampled can be identified by the eigenvalue estimate.
Then, by computing enough samples, we can learn all of the eigenvalues of at least $\sigma^2$ and get the corresponding states with only $\tilde{O}(\|A\|_F^2/\sigma^2)$ overhead, giving Theorem~\ref{thm:qpca}.

\begin{proof}[Proof of Theorem~\ref{thm:qpca}]
Because we can prepare $\ket{A}$ in $O(T)$ time, we can also prepare $\rho$ with density matrix $\frac{1}{\|A\|_F^2}A^\dagger A$ in $O(T)$ time, since this is the state of the second set of qubits of $\ket{A}$.
So, we can get a sample $(\ket{v}, \bar{\lambda})$ in $O(T\min(\eps_\sigma,\delta)^{-3})$ time.
The eigenvectors and eigenvalues of $\rho$ are $v_i$ and $\sigma_i^2$, respectively, from the SVD of $A$.

By assumption, we know that the first $k$ eigenvectors have size $\geq \sigma^2$, and because we estimate our $\lambda_i$'s to $\eps_\sigma/4$ error, we can identify an eigenvector by its eigenvalue estimate.
For each $i \in [k]$, $\ket{v_i}$ appears with probability $\geq \sigma^2/\|A\|_F^2$, so we see it in $\tfrac{\|A\|_F^2}{\sigma^2} \log\frac{1}{\delta}$ samples with probability $\geq 1-\delta$.
By union bound, we can find $\ket{v_i}$ and their corresponding eigenvalue estimates for all $k$ with $O(\tfrac{\|A\|_F^2}{\sigma^2}\log\frac{k}{\delta})$ samples.
The eigenvectors are exactly correct and the eigenvalues have $\eps_\sigma\|A\|_F^2$ error, as desired: so, this is the desired output.
The runtime is the time to output $O(\tfrac{\|A\|_F^2}{\sigma^2}\log\frac{k}{\delta})$ samples.
\end{proof}

Note that the quantum algorithm crucially uses the gap assumption in Problem~\ref{prob:pca}.
Without a gap assumption, it's still possible to efficiently learn the subspace spanned by the top $k$ singular vectors. In this work, we restrict our scope to the task of finding individual singular vectors as stated \cite{lmr14}.

Also note that the QPCA result isn't particularly useful if $A$ is not low-rank: if $\rho$ is high-dimensional, we would expect that $\lambda_j$'s are around $O(1/n)$, and if we wanted to distinguish individual eigenvectors, we would also have that $\eps_\sigma = O(1/n)$, ruining any exponential speedup.

\subsection{Analysis of Low-Rank Approximation Algorithm}

\setcounter{algorithm}{1}
\begin{algorithm}[H]
\caption{Low-rank approximation~\cite{fkv98}} \label{alg:svd}
\begin{algorithmic}[1]
    \Statex {\bf Input:} $O(T)$-time $\SQ(A) \in \BB{R}^{m\times n}$, $\sigma$, $\eps$, $\delta$
    \Statex {\bf Output:} $\SQ(S) \in \BB{C}^{\ell \times n}, \Q(\hat{U}) \in \BB{C}^{q \times \ell}, \Q(\hat{\Sigma}) \in \BB{C}^{\ell \times \ell}$
    \State Set $K = \|A\|_F^2/\sigma^2$ and $q = \Theta\big(\frac{K^4}{\eps^2}\log(\frac1\delta)\big)$
    \State Sample rows $i_1,\ldots,i_q$ from $\tilde{A}$ and define $S \in \BB{R}^{q \times n}$ such that $S_{r,*} := A_{i_r,*}\frac{\|A\|_F}{\sqrt{q}\|A_{i_r,*}\|}$
    \State Sample columns $j_1,\ldots,j_q$ from $\C{F}$, where $\C{F}$ denotes the distribution given by sampling a uniform $r \sim [q]$, then sampling $c$ from $S_r$.
    \State Let $W \in \BB{C}^{q\times q}$ be the normalized submatrix $W_{*,c} := \frac{S_{*,j_c}}{q\C{F}(j_c)}$
    \State Compute the left singular vectors of $W$ $\hat{u}^{(1)},\ldots,\hat{u}^{(\ell)}$ that correspond to singular values $\hat{\sigma}^{(1)},\ldots,\hat{\sigma}^{(\ell)}$ larger than $\sigma$
    \State Output $\SQ(S)$, $\hat{U} \in \BB{R}^{q\times \ell}$ the matrix with columns $\hat{u}^{(i)}$, and $\hat{\Sigma} \in \BB{R}^{\ell\times \ell}$ the diagonal matrix with entries $\hat{\sigma}^{(i)}$.
\end{algorithmic}
\end{algorithm}

\setcounter{theorem}{8}
\begin{restatable}[{\cite[Theorem~4.4]{tang18a}}, adapted from Frieze et al.\ \cite{fkv98}]{proposition}{svd} \label{prop:svd}
Suppose we are given $O(T)$-time $\SQ(A) \in \BB{C}^{n\times d}$, a singular value threshold $\sigma$, and an error parameter $\eps \in (0,\sqrt{\sigma/\|A\|_F}/4]$.
Denote $K := \|A\|_F^2/\sigma^2$.
Then in 
$$
    O\Big(\frac{K^{12}}{\eps^6}\log^3\frac1\delta + T\frac{K^8}{\eps^4}\log^2\frac{1}{\delta}\Big)
$$
time, Algorithm~\ref{alg:svd} outputs $O(T)$-time $\SQ(S) \in \BB{C}^{q \times n}$ and $\hat{U} \in \BB{C}^{q\times \ell}, \hat{\Sigma} \in \BB{R}^{\ell \times \ell}$ (for $\ell = \Theta(\frac{K^4}{\eps^2}\log\frac1\delta)$) implicitly describing a low-rank approximation to $A$, $D := A\hat{V}\hat{V}^\dagger$ with $\hat{V} := S^\dagger\hat{U}\hat{\Sigma}^{-1}$ (notice $\operatorname{rank}D \leq \ell$).
This description satisfies the following with probability $\geq 1-\delta$:
\begin{enumerate}[label=(\alph*),noitemsep]
    \item $\sigma_\ell \geq \sigma - \eps\|A\|_F$ and $\sigma_{\ell+1} < \sigma + \eps\|A\|_F$;
    \item $\|A - D\|_F^2 \leq \|A - A_\ell\|_F^2 + \eps\|A\|_F^2$;
    \item $\sum_{i=1}^\ell |\hat{\sigma}_i^2 - \sigma_i^2| \leq \eps\|A\|_F^2/K^{1.5} = \eps\sigma^3/\|A\|_F$;
    \item $\|\hat{V} - \Lambda\|_F \leq \eps$ for some $\Lambda$ with orthonormal columns and the same image as $\hat{V}$.
\end{enumerate}
\end{restatable}
Intuitively, (a) shows that $\ell$ is the right place to truncate up to $\eps$ error, (b) is a low-rank approximation guarantee, (c) shows that our singular values are approximately correct, and (d) shows that our singular vectors are approximately orthonormal.
These additive-error bounds are weaker than the relative-error bounds common in the classical ML literature.
This seems inherent, since our $\SQ$ access assumes less about input compared to the classical literature (see discussion).
Quantum algorithms also only achieve additive-error bounds.

\begin{proof}
Because the only difference between Algorithm~\ref{alg:svd} and \textsc{ModFKV} from previous work \cite{tang18a} is the different choice of $\eps$, this result mostly follows directly from previous work.
The algorithm and runtime are given in Section~4; (a) follows from Lemma~4.5($\heartsuit$); (b) follows from Lemma~4.5($\diamondsuit$); (d) follows from Proposition~4.6.
For (c), we know that
    \[ \|A^\dagger A - S^\dagger S\|_F, \|SS^\dagger - CC^\dagger\|_F \leq \|A\|_F^2/\sqrt{q} = \eps\sigma^2/K.\]
This is stated in the proof of {\cite[Proposition~4.6]{tang18a}}; no factor of $\log(1/\delta)$ appears because it is used to amplify the probability of this equation being true.
Let $\sigma_{M,i}$ denote the singular values of $M$, respectively.
By applying the Hoffman-Wielandt inequality to both, we get that
    \[ \sqrt{\sum |\sigma_{A,i}^2 - \sigma_{S,i}^2|^2}, \sqrt{\sum |\sigma_{S,i}^2 - \sigma_{C,i}^2|^2} \leq \eps\|A\|_F^2/K^2. \]
so
\begin{align*}
    \sum_{i=1}^k |\sigma_{A,i}^2 - \sigma_{C,i}^2| &\leq \sqrt{k}\Big(\sum_{i=1}^k |\sigma_{A,i}^2 - \sigma_{C,i}^2|^2\Big)^{1/2}\\ 
    &\leq \sqrt{k}\Big(\sum |\sigma_{A,i}^2 - \sigma_{C,i}^2|^2\Big)^{1/2} \\
    &\leq \sqrt{k}\eps\|A\|_F^2/K^2 \leq \eps\|A\|_F^2/K^{1.5}.
\end{align*}
\end{proof}

\subsection{Proof of Proposition~\ref{prop:matvec}}

\setcounter{algorithm}{2}
\begin{algorithm}[H]
\caption{Matrix-vector SQ access} \label{alg:matvec}
\begin{algorithmic}[1]
    \Statex {\bf Input:} $O(T)$-time $\SQ(V^\dagger) \in \BB{C}^{k\times n}$, $\Q(w) \in \BB{C}^k$
    \Statex {\bf Output:} $\SQ^\nu(Vw)$
    \Function{RejectionSample}{$\SQ(V^\dagger),\Q(w)$} \label{proc:rejsamp}
    \State Sample $i \in [k]$ proportional to $|w_i|^2\|V_{*,i}\|^2$ by manually calculating all $k$ probabilities
    \State Sample $s \in [n]$ from $V_{*,i}$ using $\SQ(V^\dagger)$ \label{line:matvecsample}
    \State Compute $r_s = (Vw)_s^2/(k\sum_{j=1}^k(V_{sj}w_j)^2)$ (after querying for $w_j$ and $V_{sj}$ for all $j \in [k]$)
    \State Output $s$ with probability $r_s$ (success); otherwise, output $\varnothing$ (failure)
    \EndFunction
    \State \textsc{Query}: output $(Vw)_s$
    \State \textsc{Sample}: run \textsc{RejectionSample} until success (outputting $s$) or $kC(V,w)\log\frac1\delta$ failures (outputting $\varnothing$)
    \State \textsc{Norm}($\nu$): Let $p$ be the fraction of successes from running \textsc{RejectionSample} $\frac{k}{\nu^2}C(V,w)\log\frac1\delta$ times; output $pk\sum_{i=1}^k |w_i|^2\|V_{*,i}\|^2$
\end{algorithmic}
\end{algorithm}

\setcounter{theorem}{6}
\begin{restatable}[{\cite[Proposition~4.3]{tang18a}}]{proposition}{matvec}\label{prop:matvec}
For $V \in \BB{C}^{n\times k}, w \in \BB{C}^k$, given $\SQ(V^\dagger)$ and $\Q(w)$, Algorithm~\ref{alg:matvec} simulates $\SQ^\nu(Vw)$ where the time to query is $O(Tk)$, sample is $O(Tk^2C(V,w)\log\frac{1}{\delta})$, and query norm is $O(Tk^2C(V,w)\frac{1}{\nu^2}\log\frac{1}{\delta})$.
Here, $\delta$ is the desired failure probability and $C(V,w) = \sum \|w_iV_{*,i}\|^2/\|Vw\|^2$.
\end{restatable}

\begin{proof}
    We analyze Algorithm~\ref{alg:matvec}.
    We use the following easy-to-verify facts about the function \textsc{RejectionSample}: (1) $r_s \leq 1$ by Cauchy-Schwarz, so the protocol is well-defined; (2) conditioned on success, $s$ is a measurement from $\ket{Vw}$, because the denominator of $r_s$ is the probability $s$ is selected in Line~\ref{line:matvecsample}, up to normalization; (3) the function succeeds with probability $(kC(V,w))^{-1}$; (4) running it takes $O(Tk)$ time.

    \textsc{Query} clearly has the stated properties.
    (2) implies that \textsc{Sample} is correct if it succeeds, (3) implies that it has the stated failure probability, and (4) implies the stated runtime.
    (3) implies that \textsc{Norm} is correct (by a Chernoff bound), and (4) implies the stated runtime.
\end{proof}

\subsection{Proof of Theorem~\ref{thm:cpca}}

\setcounter{algorithm}{4}
\begin{algorithm}[H]
  \caption{Principal component analysis} \label{alg:cpca}
\begin{algorithmic}[1]
  \Statex {\bf Input:} $\SQ(A) \in \BB{C}^{n \times d}$, $\sigma, k, \eta$ as in Problem~\ref{prob:pca}
  \Statex {\bf Output:} $\hat{\sigma}_i$ and $\hat{v}_i$ for $i \in [k]$ as in Problem~\ref{prob:pca}
  \State Set $\eps \gets \min(\eps_\sigma K^{1.5},\eps_v^2\eta,\frac14K^{-1/2})$ and $\sigma' \gets \sigma - \eps\|A\|_F$
  \State Run Algorithm~\ref{alg:svd} with parameters $\sigma'$, $\eps$, and $\delta/k$ to find a low-rank approximation described by $\SQ(S) \in \BB{C}^{q \times n}$, $\hat{U} \in \BB{C}^{q\times \ell}$, $\hat{\Sigma} \in \BB{C}^{\ell \times \ell}$
  \State For $i \in [k]$, let $\hat{\sigma}_i := \hat{\Sigma}_{ii}$, $\hat{v}_i := S^\dagger \hat{U}_{*,i}/\hat{\Sigma}_{ii}$ \label{line:pcaout}
  \State Output $\hat{\sigma}_i^2$ and $\SQ^\nu(\hat{v}_i)$; SQ access follows from Algorithm~\ref{alg:matvec} with matrix $S^\dagger$ and vector $\hat{U}^{(i)}/\hat{\Sigma}_{ii}$.
\end{algorithmic}
\end{algorithm}

\setcounter{theorem}{7}
\begin{theorem} \label{thm:cpca}
Given $O(T)$-time $\SQ(A) \in \BB{C}^{n\times d}$, with $\eps_\sigma, \eps_v,\delta \in (0,0.01)$, Algorithm~\ref{alg:cpca} outputs the desired estimates for Problem~\ref{prob:pca} $\hat{\sigma}_1,\ldots,\hat{\sigma}_k$ and $O(T\frac{K^9}{\eps^4}\log^3(\frac k\delta))$-time $\SQ^{0.01}(\hat{v}_1,\ldots,\hat{v}_k)$ in $O(\frac{K^{12}}{\eps^6}\log^3(\frac k\delta)+T\frac{K^{8}}{\eps^4}\log^2(\frac k\delta))$ time, where $\eps = \min(0.1\eps_\sigma K^{1.5},\eps_v^2\eta,\frac14K^{-1/2})$.
\end{theorem}

Most of this theorem simply follows by combining the guarantees from Proposition~\ref{prop:svd} and Proposition~\ref{prop:matvec} to get the PCA guarantees.
The only nontrivial piece of the proof is showing that $\|v_i - \hat{v}_i\| \leq \eps_\sigma$, since the low-rank approximation algorithm gives no guarantee about controlling individual $\hat{v}_i$'s.
The \ref{prop:svd}(b) bound, even with $\eps = 0$, only implies that $\{\hat{v}_i\}$ span the same subspace as $\{v_i\}$.
So, for $r \in [k]$, we consider the rank-$r$ approximation formed by using only $\hat{v}_1,\ldots,\hat{v}_r$.
The \ref{prop:svd}(b) bound for \emph{all $r$} of these low-rank approximations give the desired bounds: using the gap assumption, the first $\hat{v}_r$ to deviate from $v_r$ will violate the bound for that rank approximation.
So, all $\hat{v}_r$ must be close to their corresponding $v_r$.

\begin{proof}
First, consider runtime.
From Proposition~\ref{prop:svd}, using that $\sigma' \geq 3\sigma/4$, we have that the runtime of Algorithm~\ref{alg:cpca} is correct.
Further, for our $\hat{v}_i$'s, applying Algorithm~\ref{alg:matvec} gives the desired runtimes, since
\begin{multline*}
    C(S^\dagger, \hat{U}_{*,i}/\hat{\Sigma}_{ii}) = \frac{\sum_j \|S_{j,*}\|^2(\hat{U}_{ji}/\hat{\Sigma}_{ii})^2}{\|\hat{v}_i\|^2} \\
    \leq \frac{\|S\|_F^2\|\hat{U}_{*,i}\|^2}{\hat{\Sigma}_{ii}^2\|\hat{v}_i\|^2}
    = \frac{\|A\|_F^2}{\hat{\Sigma}_{ii}^2\|\hat{v}_i\|^2}
    \leq \frac{\|A\|_F^2}{\sigma^2(1-\eps)}
    = O(K).
\end{multline*}
where we used Cauchy-Schwarz, that $\|S\|_F^2 = \|A\|_F^2$, $\|\hat{U}_{*,i}\| = 1$, $\hat{\Sigma}_{ii} \geq \sigma$, and Proposition~\ref{prop:svd}(d).

For correctness: by \ref{prop:svd}(a), $\sigma_{\ell+1} < \sigma$, so $\ell \geq k$, and Line~\ref{line:pcaout} of Algorithm~\ref{alg:cpca} is well-defined.
Further, by \ref{prop:svd}(c), we can bound $\hat{\sigma}_i$ error $|\hat{\sigma}_i^2 - \sigma_i^2| \leq \eps\frac{\sigma'^3}{\|A\|_F} \leq \eps_\sigma\|A\|_F^2$.

Now, we show that $\|v_i - \hat{v}_i\| \leq \eps_\sigma$.
Let $\hat{V}_r$ denote $\hat{V}$ truncated to its first $r$ columns.
Observe that for all $r \leq \ell$, $\|A - A\hat{V}_r\hat{V}_r^\dagger\|_F^2 \leq \|A - A_r\|_F^2 + \eps\|A\|_F^2$ holds with probability $\geq 1-\delta$.
This holds for $r = \ell$ by Proposition~3(b), so suppose $r < \ell$.
We can imagine that our run of Algorithm~3 is equivalent to simultaneous runs with different singular value thresholds $\gamma_r := \frac12(\sigma_r + \sigma_{r+1})$, the only differences being that we truncate to threshold $\sigma$ instead of $\gamma_r$, and $q$ is larger.
Truncating to $\gamma_r$ is equivalent to truncating at the $r$th singular value, since $\hat{\sigma}_r > \gamma_r > \hat{\sigma}_{r+1}$:
\begin{align*}
  |\hat{\sigma}_i - \sigma_i| &\leq \frac{1}{\sigma}|\hat{\sigma}_i^2 - \sigma_i^2| \leq \frac1\sigma\sum_{i=1}^\ell |\hat{\sigma}_i^2 - \sigma_i^2| \leq \frac{\eta}{10}\|A\|_F; \\
  |\sigma_i - \gamma_i| &= \frac{1}{2}|\sigma_i - \sigma_{i+1}| \geq \frac{1}{4\|A\|_F}|\sigma_i^2-\sigma_{i+1}^2| \geq \frac{\eta}{4}\|A\|_F.
\end{align*}
So, $\hat{\sigma}_r \geq \sigma_r - \frac{\eta}{10}\|A\|_F > \sigma_r - \frac{\eta}{4}\|A\|_F \geq \gamma_r$, and similarly for the other side.
Increasing $q$ is equivalent to decreasing $\eps$, so for our truncated $\hat{V}_{[r]}$'s, the guarantees from Proposition~3 hold.
Namely, 3(b) holds as desired.
By choosing failure probability $\delta/k$, we can guarantee, with probability $\geq 1-\delta$, $\|A - A\hat{V}_r\hat{V}_r^\dagger\|_F^2 \leq \|A - A_r\|_F^2 + \eps\|A\|_F^2$ holds for all $r \in [k]$.

Fix $\Lambda$ as the isometry from Proposition~3(d) satisfying $\|\hat{V} - \Lambda\|_F \leq \eps$.
Now, consider a truncation $\hat{V}_r$ for any $r \in [k]$, and notice that $\|\hat{V}_r - \Lambda_r\|_F \leq \eps$.
\begin{align*}
    \|A - A\Lambda_r\Lambda_r^\dagger\|_F^2
    &\leq (\|A - A\hat{V}_r\hat{V}_r^\dagger\|_F + O(\eps)\|A\|_F)^2 \\
    &\leq \|A - A\hat{V}_r\hat{V}_r^\dagger\|_F^2 + O(\eps)\|A\|_F^2 \\
    &\leq \|A - A_r\|_F^2 + O(\eps)\|A\|_F^2
\end{align*}
Now, we rewrite the expression, heavily using that $\Lambda_r$ and $V_r$ (the matrix with columns $v_1,\ldots,v_r$ from $A$'s SVD) are isometries.
\begin{align*}
    &O(\eps)\|A\|_F^2 \\
    & \geq \|A - A\Lambda_r\Lambda_r^\dagger\|_F^2 - \|A - A_r\|_F^2 \\
    &= \|A\|_F^2 - \|A\Lambda_r\Lambda_r^\dagger\|_F^2 - (\|A\|_F^2 - \|A_r\|_F^2) \\
    &= \|A_r\|_F^2 - \|A\Lambda_r\Lambda_r^\dagger\|_F^2 \\
    &= \sum_{i=1}^r \sigma_i^2 - \sum_{i=1}^{\min m,n}\sigma_i^2\|\Lambda_r\Lambda_r^\dagger v_i\|^2 \\
    &= \sum_{i=1}^r \sigma_i^2(1-\|\Lambda_r\Lambda_r^\dagger v_i\|^2 ) - \sum_{i=r+1}^{\min m,n}\sigma_i^2\|\Lambda_r\Lambda_r^\dagger v_i\|^2 \\
    &\geq \sigma_r^2\sum_{i=1}^r (1-\|\Lambda_r\Lambda_r^\dagger v_i\|^2 ) - \sigma_{r+1}^2\sum_{i=r+1}^{\min m,n}\|\Lambda_r\Lambda_r^\dagger v_i\|^2 \\
    &= \sigma_r^2\|(I-\Lambda_r\Lambda_r^\dagger)V_rV_r^\dagger\|_F^2 - \sigma_{r+1}^2\|\Lambda_r\Lambda_r^\dagger(I-V_rV_r^\dagger)\|_F^2 \\
    &= \sigma_r^2(r-\|\Lambda_r\Lambda_r^\dagger V_rV_r^\dagger\|_F^2) - \sigma_{r+1}^2(r-\|\Lambda_r\Lambda_r^\dagger V_rV_r^\dagger\|_F^2) \\
    &\geq \eta\|A\|_F^2(r-\|\Lambda_r\Lambda_r^\dagger V_rV_r^\dagger\|_F^2)
\end{align*}
Here, we used the Pythagorean theorem ($\|A\Pi\|_F^2 + \|A(I-\Pi)\|_F^2 = \|A\|_F^2$ for $\Pi$ an orthogonal projector), that $\|\Lambda\Lambda^\dagger v_i\|^2 \in [0,1]$, that $\|VM\|_F = \|M\|_F$ if $V$ is an isometry, that Frobenius norm decomposes $\|M\|_F^2 = \sum \|M_{*,i}\|^2$, and the gap assumption from the PCA problem statement.

So, $\|\Lambda_r\Lambda_r^\dagger V_rV_r^\dagger\|_F^2 = \|\Lambda_r^\dagger V_r\|_F^2 \geq r - O(\eps/\eta)$, and equivalently, $\|(V_rV_r^\dagger - I)\Lambda_r\Lambda_r^\dagger\|_F^2 = O(\eps/\eta)$.
\begin{align*}
    &\|v_r - \hat{v}_r\| \\
    &\leq \|v_r - \Lambda_{*,r}\| + \|\Lambda_{*,r} - \hat{v}_r\|\\
    &\leq \|v_r - \Lambda_{*,r}\| + \eps \\
    &\leq \|v_r - V_kV_k^\dagger\Lambda_{*,r}\| + \|V_kV_k^\dagger \Lambda_{*,r} - \Lambda_{*,r}\| + \eps \\
    &\leq \|v_r - V_kV_k^\dagger\Lambda_{*,r}\| + \|(V_kV_k^\dagger - I)\Lambda_k\Lambda_k^\dagger\|_F + \eps\\
    &\leq \|v_r - V_kV_k^\dagger\Lambda_{*,r}\| + O(\sqrt{\eps/\eta})\\
    &= \sqrt{(v_r - V_kV_k^\dagger\Lambda_{*,r})^\dagger(v_r - V_kV_k^\dagger\Lambda_{*,r})} + O(\sqrt{\eps/\eta})\\
    &\leq \sqrt{2 - 2\langle v_r, \Lambda_{*,r}\rangle} + O(\sqrt{\eps/\eta})\\
    &= O(\sqrt{\eps/\eta}) = O(\eps_v)
\end{align*}
The last line follows from combining previously-seen inequalities:
let $x_{i,j} = \langle v_i, \Lambda_{*,j}\rangle^2$.
Then we can rewrite
\begin{align*}
    \|\Lambda_s^\dagger V_s\|_F^2 \geq s - O(\eps/\eta) &\iff \sum_{i,j = 1}^s x_{i,j} \geq s-O(\eps/\eta); \\
    \|\Lambda^\dagger v_i\|^2 \leq 1 &\iff \sum_{j=1}^k x_{i,j} \leq 1; \\
    \|(\Lambda_{*,j})^\dagger V_k\|^2 \leq 1 &\iff \sum_{i=1}^k x_{i,j} \leq 1.
\end{align*}
Now, consider taking a linear combination of these inequalities: add together the first inequality for $s = r-1, r$ and subtract the second and third inequalities for $i,j \in [r-1]$.
\begin{align*}
    \sum_{i,j = 1}^{r-1} x_{i,j} + \sum_{i,j = 1}^r x_{i,j} &\geq 2r-1-O(\eps/\eta) \\
    -\sum_{i=1}^{r-1}\sum_{j=1}^k x_{i,j} &\geq 1-r \\
    -\sum_{j=1}^{r-1}\sum_{i=1}^k x_{i,j} &\geq 1-r \\
    \implies x_{rr} - \sum_{i=1}^{r-1}\sum_{j=r+1}^k(x_{i,j} + x_{j,i}) &\geq 1 - O(\eps/\eta)
\end{align*}
So, $x_{rr} \geq 1-O(\eps/\eta)$, implying that $\langle v_r,\Lambda_{*,r}\rangle \geq 1-O(\eps/\eta)$.
\end{proof}

%